\def\BibTeX{{\rm B\kern-.05em{\sc i\kern-.025em b}\kern-.08em
    T\kern-.1667em\lower.7ex\hbox{E}\kern-.125emX}}
\newlist{enumsteps}{enumerate}{2}
\setlist[enumsteps,1]{label=Case \arabic*: }
\setlist[enumsteps,2]{label=Case \arabic{enumstepsi}.\arabic*: }
\newsavebox{\ieeealgbox}
\newtheorem{policy}{Network-aware pricing policy}
\newtheorem{theorem}{Theorem}
\newtheorem{lemma}{Lemma}
\newtheorem{definition}{Definition}
\newtheorem*{policy*}{Dynamic NEM}
 \def\old#1{}
\def\nn{\nonumber}
\def\beq{\begin{equation}}
\def\eeq{\end{equation}}
\def\bea{\begin{eqnarray}}
\def\eea{\end{eqnarray}}
\def\ba{\begin{array}}
\def\ea{\end{array}}
\def\bitem{\begin{itemize}}
\def\eitem{\end{itemize}}
\def\ben{\begin{enumerate}}
\def\een{\end{enumerate}}
\def\ie{{\it i.e.,\ \/}}
\definecolor{bgrd}{rgb}{1,1,1}
\definecolor{gray}{rgb}{0.5,0.5,0.5}
\definecolor{dkr}{rgb}{0.7,0.1,0.2}
\definecolor{dkb}{rgb}{0.1,0.1,0.8}
\def\tcb{\textcolor{blue}}
\newcommand{\mbbR}{\mathbb{R}}
\def\Bc{{\cal B}}
\def\Dc{{\cal D}}
\def\Kc{{\cal K}}
\def\Nc{{\cal N}}
\def\Oc{{\cal O}}
\def\Pc{{\cal P}}
\def\Xc{{\cal X}}
\begin{document}

\title{Network-Aware and Welfare-Maximizing Dynamic Pricing for Energy Sharing
}

\author{Ahmed S. Alahmed\orcidlink{0000-0002-4715-4379}, Guido Cavraro\orcidlink{0000-0003-0296-720X}, Andrey Bernstein\orcidlink{0000-0003-4489-8388}, and Lang Tong\orcidlink{0000-0003-3322-2681}\thanks{\scriptsize Ahmed S. Alahmed and Lang Tong are with the School of Electrical and Computer Engineering, \textit{Cornell University}, Ithaca, NY, USA ({\tt \{\tcb{ASA278,~LT35}\}\tcb{@cornell.edu}}). Guido Cavraro and Andrey Bernstein are with the Power System Engineering Center, \textit{National Renewable Energy Laboratory}, Golden, CO, USA ({\tt \{\tcb{GUIDO.CAVRARO,~ABERNSTE}\}\tcb{@nrel.gov}}).
}
}

\maketitle

\begin{abstract}
The proliferation of behind-the-meter (BTM) distributed energy resources (DER) within the electrical distribution network presents significant supply and demand flexibilities, but also introduces operational challenges such as voltage spikes and reverse power flows. In response, this paper proposes a network-aware dynamic pricing framework tailored for energy-sharing coalitions that aggregate small, but ubiquitous, BTM DER downstream of a distribution system operator's (DSO) revenue meter that adopts a generic net energy metering (NEM) tariff. By formulating a Stackelberg game between the energy-sharing market leader and its prosumers, we show that the dynamic pricing policy induces the prosumers toward a network-safe operation and decentrally maximizes the energy-sharing social welfare. The dynamic pricing mechanism involves a combination of a locational {\em ex-ante} dynamic price and an {\em ex-post} allocation, both of which are functions of the energy sharing's BTM DER. The {\em ex-post} allocation is proportionate to the price differential between the DSO NEM price and the energy-sharing locational price. Simulation results using real DER data and the IEEE 13-bus test systems illustrate the dynamic nature of network-aware pricing at each bus, and its impact on voltage.
\end{abstract}

\section{Introduction}\label{sec:intro}
\lettrine{W}{hile} BTM DER are primarily adopted to provide prosumer services such as bill savings and backup power, they can also be leveraged, under proper consumer-centric mechanism design, to provide various grid services such as voltage control, system support during contingencies, and new capacity deferrals \cite{Martinietal:22IEEEPEM}. Harnessing the flexibility of BTM DER participation in grid services is usually challenged by the DSO's lack of visibility and controllability on BTM DER alongside the absence of network-aware pricing mechanisms that can induce favorable prosumer behaviors. 

The rising notion of energy sharing of a group of prosumers under the DSO's tariff presents a compelling solution to optimize DER utilization, comply with dynamic grid constraints, and promote renewable energy integration. A major barrier facing the practical implementation of energy-sharing markets is the incorporation of network constraints into their pricing, and aligning the objectives of the self-interested energy-sharing prosumers with the global objective of maximizing the coalition's welfare.


Despite the voluminous literature on energy-sharing systems' DER control and energy pricing, network constraints are rarely considered due to the theoretical complexity they introduce. A short list of recent works on energy communities and energy sharing that neglected network constraints can be found here \cite{Han&Morstyn&McCulloch:19TPS,Alahmed&Tong:24TEMPR, Yang&Guoqiang&Spanos:21TSG,ChenEtal:20TSG}. Some works considered a coarse notion of network constraints by incorporating operating envelopes (OEs) at the point of common coupling between the energy-sharing system and the DSO \cite{Vespermann&Hamacher&Kazempour:21TPS,Fleischhacker&Corinaldesi&Lettner&Auer&Botterud:22TSG} that limit the export and imports between the two entities.
Others considered OEs at the prosumer's level \cite{AzimEtal:24TSG,Alahmed&Cavraro&Bernstein&Tong:23AllertonArXiv}. Few papers considered network-aware pricing mechanisms in distribution networks, such as \cite{Chen&Zhao&Low&Wierman:23TSG, Mediwaththe&Blackhall:21TPS,Li:15CDC} and the line of literature on distribution locational marginal prices (dLMP), e.g. \cite{BaiEtal:18TPS, Papavasiliou:18TSG}.

Our work differs from the existing literature in two important directions. Firstly, we consider network-aware pricing under a generic DSO NEM tariff constraint that charges the energy-sharing platform different prices based on its aggregate net consumption. Secondly, the dynamic network-aware pricing of a platform that is subject to the DSO's fixed and exogenous NEM price gives rise to a market manager's profit/deficit that needs to be re-allocated to the coalition members. We shed light on a unique re-allocation rule that makes the prosumers' payment functions uniform, even if they are located on different buses and the network constraints are binding. Such a re-allocation rule is highly relevant when charging end-users, as it avoids `undue discrimination', which is one of the key principles of rate design outlined by Bonbright \cite{bonbright1988principles}.

In this paper, we present a network-aware and welfare-maximizing pricing policy for energy-sharing coalitions that aggregate DER downstream of a DSO's revenue meter that charges the energy-sharing platform based on a generic NEM tariff. The pricing policy announces an {\em ex-ante} locational, threshold-based, and dynamic price to induce a collective prosumer response that decentrally maximizes the social welfare, while abiding by the network voltage constraints. An {\em ex-post} charge/reward is then used to ensure the market operator's profit neutrality. We show that the market mechanism achieves an equilibrium to the Stackelberg game between the energy-sharing market operator and its prosumers.
Although network constraints couple the decisions of the energy-sharing prosumers, which give rise to locational marginal prices (LMP), we show that by adopting a unique proportional re-allocation rule, the payment function becomes uniform for all prosumers, even if they are located at different buses in the energy-sharing network. Numerical simulations using the IEEE 13-bus test feeder and real BTM DER data shed more light on how the pricing policy influences prosumers' response to ensure safe network operation.

This paper extends our work on Dynamic NEM (D-NEM) without OEs \cite{Alahmed&Tong:24TEMPR} and with OEs \cite{Alahmed&Cavraro&Bernstein&Tong:23AllertonArXiv} by incorporating network constraints, which add substantial complexity, primarily due to coupling of the DER decisions across network buses.

For the column vector $\bm{x}$,  $[\bm{x}]^+$, and $[\bm{x}]^-$ represent its positive and negative elements. The notation $[x]_{\underline{x}}^{\overline{x}}$ represents the projection of $x$ into the closed and convex set $[\underline{x},\overline{x}]$ as per the rule $[x]_{\underline{x}}^{\overline{x}}:=\max\{\underline{x},\min\{x,\overline{x}\}\}$. The notation is also used for vectors, \ie $[\bm{x}]_{\underline{\bm{x}}}^{\overline{\bm{x}}}$.

\section{Proposed Framework and Network Model}\label{sec:framework}
We consider the problem of designing a welfare-maximizing and network-aware pricing policy for an {\em energy sharing} system that bidirectionally transacts energy and money with the DSO under a general NEM tariff. Under NEM, the energy sharing platform imports from the DSO at an {\em import} rate if it is {\em net-consuming}, and collectively exports from the DSO at the {\em export} rate if it is {\em net-producing}. A budget-balanced {\em market operator} is responsible for announcing the market's pricing policy. The market operator uses spatially varying pricing signals to adhere to its network's operational constraints
communicated by the DSO.\footnote{We posit that such DER aggregation schemes are informed by the DSO about their networks' information, including OEs, line thermal limits, voltage limits, among others.} We assume that the timescale of community members’ decision is equivalent to that of the NEM {\em netting period} \cite{Alahmed&Tong:22EIRACM}, which allows us to adopt a single time step formulation.

A radial low voltage distribution network flow model is used to model the network power flow \cite{Baran&Wu:89TPD,Baran&Wu:89TPD2}. Consider a radial distribution network described by $\mathcal{G}=(\Bc,\mathcal{L})$, with $\Bc=\{1,\ldots,B\}$ as the set of {\em energy sharing} buses, excluding bus 0, and $\mathcal{L}=\{(i,j)\}\subset \Bc\times \Bc$ as the set of distribution lines between the buses, with $i,j$ as bus indices. The root bus $0$ represents the secondary of the transformer and is referred to as the slack bus (substation bus). The natural radial network orientation is considered, with each distribution line pointing away from bus $0$. 

For each bus $i\in \Bc$, denote by $\mathcal{L}_i \subseteq \mathcal{L}$ the set of lines on the unique path from buses 0 to $i$, and by $Z_i,q_i$ the active and reactive power consumptions of bus $i$, respectively. The magnitude of the complex voltage at bus $i$ is denoted by $v_i$, and we denote the fixed and known voltage at the slack bus by $v_0$. For each line $(i, j)\in \mathcal{L}$, denote by $r_{ij}$ and $x_{ij}$ its resistance and reactance. For each line, $(i, j)\in \mathcal{L}$, denote by $P_{ij}$ and $Q_{ij}$ the real and reactive power from buses $i$ to $j$, respectively. Let $\ell_{ij}$ denote the squared magnitude of the complex branch current from $i$ to $j$.

We adopt the distribution flow (DistFlow) model, introduced in \cite{Baran&Wu:89TPD}, to model steady state power flow in a radial distribution network, as
\begin{subequations}
\begin{align}
P_{i j}&= -Z_j+\sum_{k:(j, k) \in \mathcal{L}} P_{j k}+r_{i j} \ell_{i j} \label{subeq:activeP}\\
    Q_{i j}&=-q_j+\sum_{k:(j, k) \in \mathcal{L}} Q_{j k}+x_{i j} \ell_{i j} \label{subeq:reactiveP}\\
    v_j^2&=v_i^2-2\left(r_{i j} P_{i j}+x_{i j} Q_{i j}\right)+\left(r_{i j}^2+x_{i j}^2\right) \ell_{i j},\label{subeq:voltage}
\end{align}
\end{subequations}
where $\ell_{ij} =(P^2_{ij}+Q_{ij}^2)/v_i^2$ is the line losses, (\ref{subeq:activeP})-(\ref{subeq:reactiveP}) are the active and reactive power balance equations, and (\ref{subeq:voltage}) is the voltage drop.
We exploit a linear approximation of the DistFlow model above that ignores line losses, given that in practice $\ell_{ij}\approx 0$ for all $(i,j) \in \mathcal{L}$. Therefore, the linearized Distflow (LinDistFlow) equations are given by re-writing (\ref{subeq:activeP})-(\ref{subeq:voltage}) to
\begin{subequations}
\begin{align}
P_{i j}&=-\sum_{k\in \Oc(j)} Z_k,\quad\quad \quad
    Q_{i j}=-\sum_{k\in \Oc(j)}q_k \nn\\
    v_j^2&=v_i^2-2\left(r_{i j} P_{i j}+x_{i j} Q_{i j}\right), \nn
\end{align}
\end{subequations}
where $\Oc(j)$ represent the set node $j$'s descendants, including node $j$, \ie $\Oc(j):=\left\{i:\mathcal{L}_j\subseteq\mathcal{L}_i \right\}$. This gives a solution for $v_i^2$ in terms of $v_0^2$, as
\begin{equation*}
    v_0^2 - v_i^2 = -2\sum_{j\in \Bc}\tilde{R}_{ij}Z_j-2\sum_{j\in \Bc}\tilde{X}_{ij}q_j,
\end{equation*}
where 
\begin{equation}\label{eq:CompactLinDistFlow}
    \tilde{R}_{ij}:=\hspace{-0.4cm}\sum_{(h,k)\in\mathcal{L}_i\cap \mathcal{L}_j}\hspace{-0.2cm}r_{hk},\quad \tilde{X}_{ij}:=\hspace{-0.4cm}\sum_{(h,k)\in\mathcal{L}_i\cap \mathcal{L}_j}\hspace{-0.2cm} x_{hk}
\end{equation}
The LinDistFlow can be compactly written as,
\begin{equation}\label{eq:LinearVoltages}
    \bm{v} = -\bm{R}\bm{Z}-\bm{X}\bm{q}+v_0^2\bm{1},
\end{equation}
where $\bm{v}:=(v_1^2,\ldots,v_B^2), \bm{Z}:=(Z_1,\ldots,Z_B), \bm{q}:=(q_1,\ldots,q_B)$, and
$\bm{R} := [2 \tilde{R}_{ij}]_{B\times B}$ and $\bm{X} := [2 \tilde{X}_{ij}]_{B\times B}$ are the resistance and reactance matrices, respectively. We treat the reactive power $\bm{q}$ as given constants rather than decision variables, which allows us to write (\ref{eq:CompactLinDistFlow}) as
\begin{equation}\label{eq:CompactLinDistFlow2}
     \bm{v} = -\bm{R}\bm{Z}+ \bm{\hat{v}},
\end{equation}
where $\bm{\hat{v}}:= -\bm{X}\bm{q}+v_0^2\bm{1}$. The voltage magnitude vector above is constrained as
\begin{equation}\label{eq:LinDistFlowVolt}
\bm{v}_{\text{min}}\preceq \bm{v} \preceq \bm{v}_{\text{max}},
\end{equation}
where $\bm{v}_{\text{min}}:=v_{\text{min}}^2 \bm{1}$ and $\bm{v}_{\text{max}}:=v_{\text{max}}^2 \bm{1}$. Given that the second term in (\ref{eq:CompactLinDistFlow2}) is fixed, we re-write (\ref{eq:LinDistFlowVolt}) to
\begin{equation}\label{eq:LinDistFlowVolt2}
    \bm{\underline{v}}\preceq \bm{v} \preceq \bm{\overline{v}},
\end{equation}
where $\overline{\bm{v}}:=\bm{v}_{\text{max}}-\bm{\hat{v}}$ and $\underline{\bm{v}}:= \bm{\hat{v}}-\bm{v}_{\text{min}}$. We will impose (\ref{eq:LinDistFlowVolt2}) on the operation of the energy-sharing market.

\section{Energy Sharing Mathematical Model}\label{sec:model}
Let $\Nc:=\{1,\ldots,N\}$ denote the set of {\em energy-sharing} system's prosumers. Every prosumer $n$ is connected to one of the $B$ buses in the considered radial network through its revenue meter that measures the prosumer's net consumption and BTM generation. Figure \ref{fig:NetworkEC} shows an example 4-bus energy-sharing platform. We denote the set of prosumers connected to bus $i\in \Bc$ by $\Nc_i$, hence, $\Nc =\bigcup_{i\in \Bc} \Nc_i$. In this section, we model prosumers' DER in $\S$\ref{subsec:DERmodel}, and payment and surplus functions in $\S$\ref{subsec:PaymentSurplus}, followed by a formulation of the proposed bi-level program in $\S$\ref{subsec:StackGame}.

\begin{figure}
    \centering
    \includegraphics[scale=0.57]{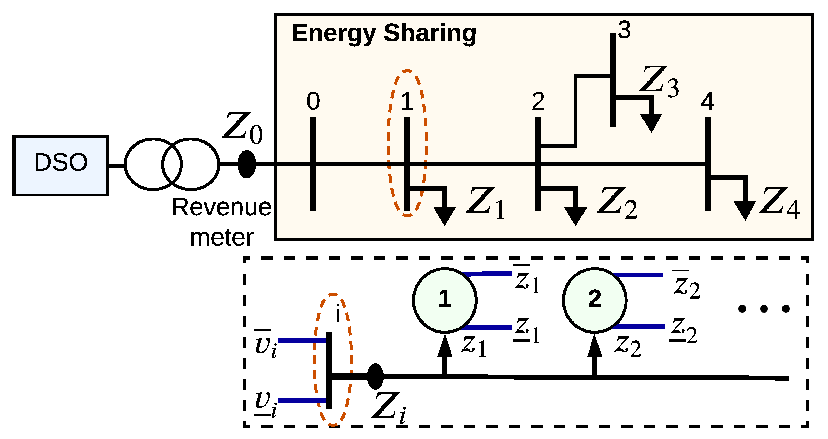}
    \vspace{-0.3cm}
    \caption{A 4-bus energy-sharing platform. $Z_0,Z_i, z_{n} \in \mbbR$ are the net consumption of the whole energy sharing platform, net consumption of bus $i$, and net consumption of prosumer $n$, respectively. $\overline{z}_n\geq 0$ and $\underline{z}_n \leq0$ are the prosumer's import and export OEs, respectively.}
    \label{fig:NetworkEC}
\end{figure}

\subsection{DER Modeling}\label{subsec:DERmodel}
Prosumers' DER consists of BTM renewable distributed generation (DG) and flexible loads (decision variables). The random {\em renewable DG} output of every prosumer $n\in \Nc$ is denoted by $g_n \in \mbbR_+$. The vector of prosumers' DG output is denoted by $\bm{g}:=(g_1,\ldots,g_N)$, and the aggregate DG output in the energy-sharing platform is defined by $G_0 = \sum_{n\in \Nc} g_n$.

The {\em flexible loads}' energy consumption vector is denoted by $\bm{d}_n \in \mbbR_+^K$, where $\Kc:=\{1,\ldots,K\}$ is the load (device) bundle. The devices are subject to their flexibility limits, as
\begin{equation}\label{eq:Conslimit}
 \bm{d}_n \in \Dc_n := [\underline{\bm{d}}_n, \overline{\bm{d}}_n],~~~ \forall n\in \Nc,
\end{equation}
where $ \underline{\bm{d}}_n$ and $\overline{\bm{d}}_n$ are the device bundle's lower and upper consumption limits of prosumer $n\in \Nc$, respectively. 

The {\em net consumption} $z_n\in \mbbR$ of each prosumer is the difference between its gross consumption and BTM generation, hence $z_n = \bm{1}^\top  \bm{d}_n - g_n$.\footnote{The proposed pricing policy can be generalized to incorporate OEs with only little mathematical complication. We show this in the appendix.} The {\em aggregate energy-sharing net consumption} is therefore $Z_0 = \sum_{n\in \Nc} z_{n}= \sum_{i\in \Bc} Z_i$.

\subsection{Payment, Surplus, and Profit Neutrality}\label{subsec:PaymentSurplus}

The energy sharing operator designs a pricing policy $\chi$ for its members, which specifies the payment function for each prosumer $n\in \Nc$ under $\chi$, denoted by $C^\chi_n(z_n)$. Energy-sharing prosumers are assumed to be rational and self-interested. Therefore, they schedule their DER based on {\em surplus maximization}. {\em Prosumer surplus} is given by
\begin{equation}\label{eq:Surplusi}
    S^{\chi}_n(\bm{d}_n, g_n):= U_n(\bm{d}_n)-C^{\chi}_n(z_n),~~ z_n=\bm{1}^\top \bm{d}_n - g_n,
\end{equation}
where for every $n\in \Nc$, the {\em utility function} of the consumption bundle $U_n(\bm{d}_n)$ is assumed to be non-decreasing, additive, strictly concave, and continuously differentiable with a {\em marginal utility function} $\bm{L}_n:=\nabla U_n=\left(L_{n 1}, \ldots, L_{n K}\right)$. Denote the {\em inverse marginal utility} vector by $\bm{f}_n:=(f_{n1},\ldots,f_{nK})$ with $f_{nk}:= L^{-1}_{nk}, \forall n\in \Nc, k\in \Kc$.  

\subsubsection{Energy Sharing Payment}\label{subsec:operatorP}
The operator transacts with the DSO under the {\em NEM X tariff}, introduced in \cite{Alahmed&Tong:22IEEETSG}, which charges the energy sharing coalition based on whether it is {\em net-importing} ($Z_0>0$) or {\em net-exporting} ($Z_0<0$) as
 \begin{align}
       \pi^{\mbox{\tiny NEM}}(Z_0) = \begin{cases}
\pi^+, &\hspace{-0.85em} Z_0\geq 0 \\ 
\pi^-, &\hspace{-0.85em} Z_0 < 0
\end{cases},
    C^{\mbox{\tiny NEM}}(Z_0) = \pi^{\mbox{\tiny NEM}}(Z_0)\cdot Z_0,\label{eq:Pcommunity}
    \end{align}
where $(\pi^+, \pi^-) \in \mbbR_+$ are the {\em buy} (retail) and {\em sell} (export) rates, respectively. 
We assume $\pi^+ \geq \pi^-$, in accordance with NEM practice \cite{Alahmed&Tong:22EIRACM}, which also eliminates risk-free price arbitrage. The operator of the energy sharing regime is {\em profit-neutral}; a term we define next.
\begin{definition}[Profit neutrality]\label{ax:ProfitNeutrality}
The operator is profit-neutral if its pricing achieves the following
\[\sum_{n\in \Nc} C^\chi_n(z_n) = C^{\mbox{\tiny NEM}}(\sum_{n\in \Nc} z_n).\]
\end{definition}
The challenging question we ask is how can the operator design a payment $C^\chi_n$, for every $n\in \Nc$, to achieve network-awareness, profit neutrality and equilibrium to the energy-sharing market, which we define next.

\subsection{Energy Sharing Stackelberg Game}\label{subsec:StackGame}
We formulate this game as a bi-level mathematical program with the upper-level optimization being the operator's pricing problem, and the lower-level optimizations representing prosumers' optimal decisions. 

Denote the consumption policy of the $n$th prosumer, given the pricing policy $\chi$, by $\psi_{n,\chi}$.
Formally, 
$$\psi_{n,\chi}: \mathbb{R}_+ \rightarrow \Dc_n, \ g_n \stackrel{C^\chi_n}{\mapsto} \psi_{n,\chi}(g_n),$$
with $\bm{\psi}_{\chi}:=\{\psi_{1,\chi},\ldots,\psi_{N,\chi}\}$ as the vector of prosumers' policies. The operator strives to design a network-aware and welfare-maximizing pricing policy $\chi^{\sharp}_{\bm{\psi}}$  (given $\bm{\psi}$), where $\chi^{\sharp}_{\bm{\psi}}: \mathbb R_+^N \rightarrow \mathbb R^N, \ \bm{g} \mapsto \bm{C}^{\chi}:=(C^\chi_1,\ldots, C^\chi_N),$ and the welfare is defined as the sum of total prosumers' surplus, as
\begin{equation*}
W^{\chi,\bm{\psi}_{\chi}}:=\sum_{n\in \Nc} S_n^{\chi}(\psi_{n,\chi}(g_n),g_n),
\end{equation*}
The bi-level program can be compactly formulated as
\begin{subequations}\label{eq:community_optimization}
\begin{align}
 \underset{\bm{C}(\cdot)}{\operatorname{maximize}}& \Bigg(W^{\chi_{\bm{\psi}}} = \sum_{n\in \Nc} U_n(\bm{d}^{\psi^\sharp_\chi}_{n}) - C^{\mbox{\tiny NEM}}(Z_0^{\psi^\sharp_\chi})\Bigg)\label{eq:UpperLevelObj}\\
 \text{subject to}&\quad \quad \sum_{n\in \Nc} C^\chi_n(z_n^{\psi^\sharp_\chi}) = C^{\mbox{\tiny NEM}}(Z_0^{\psi^\sharp_\chi})\label{eq:UpperLevelProfit}\\&
 \quad \quad Z_0^{\psi^\sharp_\chi} = \sum_{n\in \Nc} \big(\bm{1}^\top \bm{d}^{\psi^\sharp_\chi}_{n}- g_n\big)\label{eq:UpperLevelEnergyBalance}\\
 (\underline{\bm{\eta}},\overline{\bm{\eta}})&\quad \quad \underline{\bm{v}} \preceq  -\bm{R} \bm{Z}^{\psi^\sharp_\chi} \preceq \overline{\bm{v}}\label{eq:UpperLevelNetwork}\\
 &\quad \quad \text{for all } n \in \Nc \\
 &	\bm{d}^{\psi^\sharp_\chi}_{n}\hspace{-0.1cm} := \underset{\bm{d}_n \in \Dc_n}{\operatorname{argmax}}~S^\chi_n(\bm{d}_n,g_n)\hspace{-0.1cm}:=U_n(\bm{d}_n)-C^\chi_n(z_n)\label{eq:LowerLevelObj}\\
 &	\hspace{1.3cm}	\text{subject to}~~ z_n= \bm{1}^\top \bm{d}_n - g_n,\label{eq:LowerLevelEnergyBalance}
\end{align}
\end{subequations}
where 
$$\bm{Z}^{\psi^\sharp_\chi} :=(\sum_{n\in \Nc_1}\bm{1}^\top \bm{d}^{\psi^\sharp_\chi}_{n}-g_n ,\ldots, \sum_{n\in \Nc_B}\bm{1}^\top \bm{d}^{\psi^\sharp_\chi}_{n}-g_n).$$
In the following, we will assume that problem~\eqref{eq:community_optimization} is feasible, \ie a solution meeting all the constraints exists.

The program in (\ref{eq:community_optimization}) defines the Stackelberg strategy.  
Specifically, ($\chi^\ast,\bm{\psi}^\ast$) is a Stackelberg equilibrium since (a) for all $\chi \in \Xc$ and $n\in \Nc$,  $ S^\chi_n(\psi^\ast_n (g_n),g_n) \ge S^\chi_n(\psi_n (g_n),g_n)$ for all $\bm{\psi} \in \Psi$; (b) for all $\bm{\psi} \in \Psi, W^{\chi^\ast,\bm{\psi}^\ast} \ge \sum_n S_n^\chi(\psi^\ast_n(g_n),g_n)$. 

\section{Network-Aware Pricing and Equilibrium}\label{sec:MktMech}
At the beginning of each pricing period, the operator communicates the price to each prosumer. Given the price, prosumers simultaneously move to solve their own surplus maximization problem. At the end of the netting period, and given the resulting $Z_0$, the DSO charges the energy sharing operator based on the NEM X tariff in (\ref{eq:Pcommunity}). We propose the network-aware pricing policy and delineate its structure in $\S$\ref{subsec:NetAwaPrice}, followed by solving the optimal response of prosumers in $\S$\ref{subsec:ProsOptDec}. We discuss the operator's profit/deficit redistribution in $\S$\ref{subsec:ExpostAllocation} and $\S$\ref{subsec:PriceUniformity}. In $\S$\ref{subsec:EquilibriumThm}, we establish the market equilibrium result. 

\subsection{Network-Aware Dynamic Pricing}\label{subsec:NetAwaPrice}
The operator uses the renewable DG vector $\bm{g}$ to dynamically set the price taking into account network constraints. That is, the dynamic price is used to satisfy network constraints in a decentralized way by internalizing them into prosumers' private decisions.

\begin{policy}
For every bus $i\in \Bc$, the operator charges the prosumers based on a two-part pricing
\begin{equation}\label{eq:PricingPolicy}
    \chi^{\ast}: \bm{g} \mapsto C^{\chi^\ast}_n(z_n)=\underbrace{\pi^\ast_i(\bm{g})}_\text{ex-ante price}\cdot z_n - \underbrace{A_n^\ast}_\text{ex-post allocation}, \forall n\in \Nc_i,
\end{equation}
where the {\em ex-ante} bus price $\pi^\ast_i(\bm{g})$ abides by a two-threshold policy with thresholds
\begin{equation}\label{eq:thresholds}
\begin{aligned}
    \sigma_1(\bm{g}) &= \sum_{i\in \Bc} \sum_{n \in \Nc_i} \bm{1}^{\top} {[\bm{f}_n(\bm{1}\chi_i^+(\bm{g}))]}_{\underline{\bm{d}}_n}^{ \overline{\bm{d}}_n},\\ \sigma_2(\bm{g}) &= \sum_{i\in \Bc} \sum_{n \in \Nc_i} \bm{1}^{\top} {[\bm{f}_n(\bm{1}\chi_i^-(\bm{g}))]}_{\underline{\bm{d}}_n}^{ \overline{\bm{d}}_n}\geq \sigma_1(\bm{g}),
\end{aligned}
\end{equation}
as
\begin{equation}\label{eq:BusPrice}
    \pi^\ast_i(\bm{g}) = \begin{cases} \chi^+_i(\bm{g}) & , G_0< \sigma_1(\bm{g}) \\ \chi^z_i(\bm{g}) & , G_0\in [\sigma_1(\bm{g}),\sigma_2(\bm{g})]\\ \chi^-_i(\bm{g}) & , G_0> \sigma_2(\bm{g}), \end{cases}\quad
\end{equation}
and the price $\chi_i^{\kappa}$, where $\kappa:=\{+,-,z\}$, is given by
\begin{equation}\label{eq:chiKappa}
    \chi_i^{\kappa} = \pi^\kappa-\sum_{j=1}^B R_{ji} (\overline{\eta}^\ast_j-\underline{\eta}^\ast_j)
\end{equation}
where $\overline{\eta}^\ast_j$ and $\underline{\eta}^\ast_j$ are the dual variables of the upper and lower voltage limits in (\ref{eq:UpperLevelNetwork}), respectively, and the price $\pi^z:=\mu^\ast$ is the solution of
\begin{equation}\label{eq:NZprice}
    \sum_{i\in \Bc} \sum_{n\in \Nc_i} \bm{1}^\top [\bm{f}_n(\bm{1}\mu-\bm{1}\sum_{j=1}^B R_{ji}(\overline{\eta}^\ast_j-\underline{\eta}^\ast_j))]_{\underline{\bm{d}}_n}^{\overline{\bm{d}}_n}= G_0.
\end{equation}

The two pricing policy parts are composed of a locational dynamic price that is announced {\em ex-ante} and a charge (reward) that is distributed {\em ex-post}. For every bus $i\in \Bc$, the prosumer's ex-post charge/reward is denoted by $ A_n^\ast$, which we delineate in $\S$\ref{subsec:ExpostAllocation} and $\S$\ref{subsec:PriceUniformity}.
\end{policy}

 The locational {\em ex-ante} price $\pi^\ast_i(\bm{g})$ for every $i\in \Bc$ is used to induce a collective prosumer response at each bus so that the network constraints are satisfied and the energy sharing social welfare is maximized. The energy-sharing price has a similar structure to the celebrated LMP in wholesale markets \cite{Schweppe&etal:88Springer} in the sense that it takes into account demand, generation, location, and network physical limits. Also, like {\em congestionless LMP}, the energy-sharing price is uniform across all buses if the network constraints are nonbinding, as described in (\ref{eq:chiKappa}). 

 Similar to D-NEM without network constraints \cite{Alahmed&Tong:24TEMPR}, the price obeys a two-threshold policy and it is a monotonically decreasing function of the system's renewables $\bm{g}$. As shown in (\ref{eq:chiKappa}), the thresholds partition $G_0$, and the price at each bus is the D-NEM price adjusted by the shadow prices of violating voltage limits. When $G_0 \in [\sigma_1(\bm{g}),\sigma_2(\bm{g})]$ the platform is energy-balanced, and the price $\chi^z_i(\bm{g})$ is the sum of the dual variables for energy balance and voltage limits. 

The thresholds and locational prices can be computed while preserving prosumers' privacy. The operator do not need the functional form of prosumers' utilities or marginal utilities but rather asks the prosumers to submit a value for every device $k$ at a given price.

\subsection{Optimal Prosumer Decisions}\label{subsec:ProsOptDec}
After the network-aware price is announced, consumers simultaneously move to solve their own surplus maximization problem by determining their optimal decision policy $\psi^\ast_{n,\chi^\ast}: \mbbR_+ \rightarrow \Dc_n, \ g_n \stackrel{{C^{\chi^\ast}_n}}{\mapsto} \bm{d}_n^{\psi^\ast}:= \psi^\ast_n(g_n), \forall n\in \Nc$. Therefore, from the surplus definition in (\ref{eq:Surplusi}), each prosumer solves
\begin{align} 
\bm{d}_n^{\psi^\ast}=& \underset{\bm{d}_n \in \Dc_n}{\operatorname{argmax}}~~  S_n^{\chi^\ast}(\bm{d}_n,g_n):=U_n\left(\bm{d}_n\right)-  \pi^\ast_i(\bm{g}) \cdot z_n\nn\\
& \text {subject to } \hspace{0.2cm} z_n = \bm{1}^\top \bm{d}_n-r_n, \label{eq:argmax}
\end{align} 
where $A^\ast_n$ was omitted because it is announced after consumption decisions are exercised. 

\begin{lemma}[Prosumer optimal consumption]\label{lem:OptSchedule}
    Under every bus $i\in \Bc$, given the pricing policy $\chi^\ast$, the prosumer's optimal consumption is
    \begin{align}\label{eq:MemberOptz}
        \bm{d}^{\psi^\ast}_{n}(\pi^\ast_i) = [\bm{f}_{n}(\bm{1}\pi^\ast_i)]_{\underline{\bm{d}}_{n}}^{\overline{\bm{d}}_{n}},~~ \forall n\in \Nc_i.
    \end{align}
    By definition, the aggregate net consumption is 
\begin{align}\label{eq:OptNetCons}
        Z_0^{\psi^\ast}\hspace{-0.15cm}(\bm{\pi}^\ast(\bm{g})) \hspace{-0.1cm}=\hspace{-0.1cm} \sum_{i\in \Bc} \sum_{n\in \Nc_i}\hspace{-0.15cm}\begin{cases}
\bm{d}^{\psi^\ast}_{n}(\chi^+_i) - g_n &\hspace{-0.3cm}, G_0 < \sigma_1(\bm{g}) \\ 
\bm{d}^{\psi^\ast}_{n}(\chi^z_i) - g_n &\hspace{-0.3cm}, G_0 \in [\sigma_1(\bm{g}),\sigma_2(\bm{g})]\\ 
\bm{d}^{\psi^\ast}_{n}(\chi^-_i) - g_n &\hspace{-0.3cm} ,G_0 > \sigma_2(\bm{g}),
\end{cases}
\end{align}
where $\bm{\pi}^\ast:=(\pi^\ast_1,\ldots,\pi^\ast_B)$, and $Z_0^{\psi^\ast}(\bm{\pi}^\ast(\bm{g})) > 0$ if $G_0 < \sigma_1(\bm{g})$, $Z_0^{\psi^\ast}(\bm{\pi}^\ast(\bm{g})) = 0$ if $G_0 \in [\sigma_1(\bm{g}),\sigma_2(\bm{g})]$, and $Z_0^{\psi^\ast}(\bm{\pi}^\ast(\bm{g})) < 0$ if $G_0 > \sigma_2(\bm{g})$.
\end{lemma}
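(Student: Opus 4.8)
The plan is to treat the prosumer's problem \eqref{eq:argmax} as a standard separable concave maximization over the box $\Dc_n = [\underline{\bm d}_n, \overline{\bm d}_n]$ and read off the optimizer via first-order conditions. First I would note that for a fixed announced bus price $\pi^\ast_i$, the objective $S^{\chi^\ast}_n(\bm d_n, g_n) = U_n(\bm d_n) - \pi^\ast_i(\bm 1^\top \bm d_n - g_n)$ is strictly concave in $\bm d_n$ (since $U_n$ is strictly concave and the penalty term is linear) and the feasible set is compact and convex, so a unique maximizer exists. Because $U_n$ is additive, $U_n(\bm d_n) = \sum_{k\in\Kc} U_{nk}(d_{nk})$, the problem decouples across devices $k$: each coordinate solves $\max_{d_{nk}\in[\underline d_{nk},\overline d_{nk}]} U_{nk}(d_{nk}) - \pi^\ast_i d_{nk}$. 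The unconstrained stationarity condition is $L_{nk}(d_{nk}) = \pi^\ast_i$, i.e.\ $d_{nk} = f_{nk}(\pi^\ast_i)$ with $f_{nk} = L_{nk}^{-1}$, which is well-defined and decreasing since $U_{nk}$ is strictly concave and continuously differentiable. Projecting onto the box via the KKT conditions gives $d^{\psi^\ast}_{nk} = [f_{nk}(\pi^\ast_i)]_{\underline d_{nk}}^{\overline d_{nk}}$, which in vector form is exactly \eqref{eq:MemberOptz}. Equation \eqref{eq:OptNetCons} then follows by substituting \eqref{eq:MemberOptz} into $z_n = \bm 1^\top \bm d_n - g_n$, summing over $n\in\Nc_i$ and $i\in\Bc$, and invoking the definition \eqref{eq:BusPrice} of $\pi^\ast_i$ in each of the three regimes of $G_0$.

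The remaining claim — the sign of $Z_0^{\psi^\ast}$ in each regime — is where the real work is, and it hinges on relating the aggregate response to the thresholds $\sigma_1(\bm g), \sigma_2(\bm g)$ in \eqref{eq:thresholds}. The key observation is that the map $\mu \mapsto \Phi(\mu) := \sum_{i\in\Bc}\sum_{n\in\Nc_i} \bm 1^\top [\bm f_n(\bm 1(\mu - s_i))]_{\underline{\bm d}_n}^{\overline{\bm d}_n}$, where $s_i := \sum_{j} R_{ji}(\overline\eta^\ast_j - \underline\eta^\ast_j)$ is the locational adjustment so that $\chi^\kappa_i = \pi^\kappa - s_i$, is nonincreasing in $\mu$ (each $f_{nk}$ is decreasing and the projection preserves monotonicity). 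Evaluating at $\mu = \pi^+$ gives aggregate gross demand $\sigma_1(\bm g)$, and at $\mu = \pi^-$ gives $\sigma_2(\bm g)$; since $\pi^+\ge\pi^-$, monotonicity of $\Phi$ gives $\sigma_1(\bm g)\le\sigma_2(\bm g)$, consistent with \eqref{eq:thresholds}. Now in the first regime $G_0 < \sigma_1(\bm g)$ we have $\pi^\ast_i = \chi^+_i$ for every bus, so the aggregate gross demand equals $\sigma_1(\bm g)$, hence $Z_0^{\psi^\ast} = \sigma_1(\bm g) - G_0 > 0$. Symmetrically, in the third regime $\pi^\ast_i = \chi^-_i$ so $Z_0^{\psi^\ast} = \sigma_2(\bm g) - G_0 < 0$. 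In the middle regime, $\pi^z_i = \mu^\ast - s_i$ with $\mu^\ast$ defined precisely by \eqref{eq:NZprice} to make $\Phi(\mu^\ast) = G_0$, so $Z_0^{\psi^\ast} = G_0 - G_0 = 0$; one should also check that such a $\mu^\ast$ exists and lies in $[\pi^-,\pi^+]$, which follows from continuity of $\Phi$ together with the intermediate value theorem and the bounds $\Phi(\pi^+) = \sigma_1(\bm g) \le G_0 \le \sigma_2(\bm g) = \Phi(\pi^-)$.

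The main obstacle I anticipate is not any single calculation but rather the careful bookkeeping of the locational shift $s_i$ and the dual variables $(\overline{\bm\eta}^\ast, \underline{\bm\eta}^\ast)$: the thresholds in \eqref{eq:thresholds} are written in terms of $\chi^+_i(\bm g)$ and $\chi^-_i(\bm g)$, which already embed $s_i$, so I must make sure the substitution is consistent and that the definition of $\mu^\ast$ in \eqref{eq:NZprice} uses the same $s_i$. A secondary subtlety is that the thresholds $\sigma_1,\sigma_2$ and the prices $\chi^\kappa_i$ all depend on the optimal duals, which are themselves determined by the upper-level program \eqref{eq:community_optimization}; for the purposes of this lemma, however, we take the announced price $\bm\pi^\ast(\bm g)$ as given and only need internal consistency of \eqref{eq:thresholds}–\eqref{eq:NZprice}, so the argument is self-contained. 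Finally, one should handle boundary cases where $G_0$ equals $\sigma_1$ or $\sigma_2$ exactly — at $G_0 = \sigma_1$ the first and second regimes agree ($\chi^+_i$ gives $Z_0^{\psi^\ast} = 0$), and similarly at $G_0 = \sigma_2$, so the piecewise definition is consistent at the breakpoints and the stated (non-strict at the endpoints) sign conditions hold.
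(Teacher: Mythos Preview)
Your proposal is correct and follows essentially the same route as the paper: apply KKT/first-order conditions to the box-constrained, separable, strictly concave problem \eqref{eq:argmax} to obtain the projected inverse-marginal-utility formula \eqref{eq:MemberOptz}, then substitute the piecewise price \eqref{eq:BusPrice} and use the threshold definitions \eqref{eq:thresholds} and \eqref{eq:NZprice} to rewrite $Z_0^{\psi^\ast}$ as $\sigma_1(\bm g)-G_0$, $0$, or $\sigma_2(\bm g)-G_0$ in the three regimes. Your added care about the monotonicity of $\Phi$, the existence of $\mu^\ast$ via the intermediate value theorem, and the consistency at the breakpoints $G_0=\sigma_1,\sigma_2$ goes slightly beyond the paper's terse argument but does not constitute a different approach.
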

\begin{proof} We drop the prosumer subscript $n$ for brevity. The objective in (\ref{eq:argmax}) is strictly concave and differentiable. The
Lagrangian function of the surplus maximization problem, for a prosumer under bus $i$, is
$$\mathscr{L}(\bm{d},\overline{\bm{\gamma}},\underline{\bm{\gamma}})= \pi^\ast_i(\bm{g}) \cdot z - U\left(\bm{d}\right) + \overline{\bm{\gamma}}^\top (\bm{d} - \overline{\bm{d}}) - \underline{\bm{\gamma}}^\top (\bm{d} - \underline{\bm{d}}),$$
where $\overline{\bm{\gamma}}\in \mbbR_+^K$ and $\underline{\bm{\gamma}}\in \mbbR_+^K$ are the Lagrangian multipliers of the upper and lower consumption limits. From the KKT conditions we have 
$$\nabla_{\bm{d}} \mathscr{L} = \bm{1}\pi^\ast_i(\bm{g}) - \bm{L}(\bm{d}^{\psi^\ast}) + \overline{\bm{\gamma}} - \underline{\bm{\gamma}} = \bm{0},$$ 
therefore, for each device $k\in \Kc$, we have
\begin{align}
    d_k^{\psi^\ast} &= \begin{cases}
f_k(\pi^\ast_i) &, \overline{\gamma}_k = \underline{\gamma}_k = 0 \\ 
\overline{d}_k &, \overline{\gamma}_k>0,  \underline{\gamma}_k = 0\\ 
\underline{d}_k & ,\overline{\gamma}_k=0,  \underline{\gamma}_k > 0
\end{cases}\nn\\
&=:  [f_k(\pi^\ast_i)]_{\underline{d}_k}^{\overline{d}_k},\nn
\end{align}
where $f_k := L^{-1}_k$.

Give the aggregate net consumption definition $Z_0 = \sum_{n\in \Nc} (\bm{1}^\top \bm{d}_{n} - g_{n})$ and the dynamic price in (\ref{eq:BusPrice}), one can easily get (\ref{eq:OptNetCons}). Finally, from (\ref{eq:thresholds}), we can re-formulate (\ref{eq:OptNetCons}) as
\begin{align}
        Z_0^{\psi^\ast}(\bm{\pi}^\ast(\bm{g})) &= \begin{cases}
\sigma_1(\bm{g}) - G_0 &, G_0 < \sigma_1(\bm{g}) \\ 
0 &, G_0 \in [\sigma_1(\bm{g}),\sigma_2(\bm{g})]\\ 
\sigma_2(\bm{g}) - G_0 & ,G_0 > \sigma_2(\bm{g}),
\end{cases}\nn
\end{align}
which proves the sign of $Z_0^{\psi^\ast}(\bm{\pi}^\ast(\bm{g}))$ under each piece.
\end{proof}

\subsection{Ex-Post Allocation}\label{subsec:ExpostAllocation}
Unlike the {\em ex-ante} price, the {\em ex-post} allocation is distributed after the prosumers schedule their DER. The operator may choose to accrue the {\em ex-post} charge amount of each prosumer to be distributed after multiple netting periods rather than at every netting period. The {\em ex-post} fee is essentially levied to achieve profit neutrality. After the price is announced and the transaction with the DSO is settled, the profit/deficit that the operator accumulates $A^\ast(\bm{g}):= \sum_{i \in \Bc} \sum_{n\in \Nc_i} A^\ast_n$ is, using Def.\ref{ax:ProfitNeutrality},
\begin{align}
    \sum_{i\in \Bc} \sum_{n\in \Nc_i} C^\chi_n(z_n) - C^{\mbox{\tiny NEM}}(\sum_{n\in \Nc} z_n)&=0 \stackrel{\text{(\ref{eq:Pcommunity}),(\ref{eq:PricingPolicy})}}{\Longrightarrow}\nn\\ \sum_{i \in \Bc} \sum_{n\in \Nc_i} (\pi^\ast_i(\bm{g}) \cdot z_n - A^\ast_n - \pi^{\mbox{\tiny NEM}}(Z_0) \cdot z_n)&=0 \nn\\
    \sum_{i \in \Bc} \sum_{n\in \Nc_i} (\pi^\ast_i(\bm{g}) \cdot z_n -\pi^{\mbox{\tiny NEM}}(Z_0) \cdot z_n)&=\sum_{i \in \Bc} \sum_{n\in \Nc_i} A^\ast_n\nn\\
    \sum_{i \in \Bc} \sum_{n\in \Nc_i} (\pi^\ast_i(\bm{g}) -\pi^{\mbox{\tiny NEM}}(Z_0) )\cdot z_n&= A^\ast(\bm{g}).\nn
\end{align}
One can see that the larger the differential between the energy sharing price and NEM price ($\pi^\ast_i(\bm{g})-\pi^{\mbox{\tiny NEM}}(Z_0), \forall i\in \Bc$), the larger the profit/deficit. Note that if the network constraints are non-binding, \ie $\overline{\eta}^\ast_i=\underline{\eta}^\ast_i=0, \forall i\in \Bc$, then $A^\ast(\bm{g})=0$, and the pricing policy becomes one-part; see D-NEM in \cite{Alahmed&Tong:24TEMPR}.

There might not be unique way to re-allocate the operator's profit/deficit $A^\ast(\bm{g})$. A profit-sharing coalitional game can be established to fairly re-allocate the operator's profit/deficit. In $\S$\ref{subsec:PriceUniformity}, we propose a proportional allocation rule that makes the payment function uniform for all prosumers.

\subsection{Stackelberg Equilibrium}\label{subsec:EquilibriumThm}
    Under the solution $(\chi^\ast, \psi^\ast)$, with $A^\ast(\bm{g})$ as in (\ref{subsec:ExpostAllocation}) ,the operator is profit-neutral.

We show next that the network-aware pricing achieves a Nash equilibrium to the leader-follower game in $\$$\ref{subsec:StackGame}.

\begin{theorem}\label{thm:Equilibrium}
    The solution ($\chi^\ast,\psi^\ast$) is a Stackelberg equilibrium that also achieves social optimality, \ie
    \begin{align} \label{eq:SurplusMax_model2}
(\bm{d}_1^{\psi^\ast},\ldots,\bm{d}_N^{\psi^\ast}) = \underset{(\bm{d}_1,\ldots,\bm{d}_N)}{\rm argmax}&~~ \sum_{i\in \Bc} \sum_{n\in \Nc_i} U_n(\bm{d}_n)-C^{\mbox{\tiny NEM}}(Z_0)\nn \\ \text{subject to} &~~ 
Z_0 = \sum_{n\in \Nc} \big(\bm{1}^\top \bm{d}_{n}-g_n\big) \nn\\&\hspace{0.4cm} \bm{d}_n \in \Dc_n ~ \forall n\in \Nc\nn\\
&\hspace{0.4cm} \underline{\bm{v}} \preceq  -\bm{R} \bm{Z} \preceq \overline{\bm{v}}.\nn
\end{align}
\end{theorem}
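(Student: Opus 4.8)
\medskip
\noindent\textbf{Proof sketch.} The plan is to show that the decentralized best responses of Lemma~\ref{lem:OptSchedule} coincide with the (unique) primal optimizer of the centralized program displayed above, and then to repackage this equivalence into the two defining Stackelberg inequalities. First I would record that the centralized program is convex: each $U_n$ is strictly concave, $C^{\mbox{\tiny NEM}}$ is convex (its slope jumps \emph{upward} from $\pi^-$ to $\pi^+$ at $Z_0=0$, and $\pi^+\geq\pi^-$), the sets $\Dc_n$ and the constraint $\underline{\bm{v}}\preceq-\bm{R}\bm{Z}\preceq\overline{\bm{v}}$ are convex, and feasibility is assumed, so Slater's condition holds, the KKT system is necessary and sufficient, and the optimizer is unique by strict concavity. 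I would then write the Lagrangian with multipliers $\overline{\bm{\eta}},\underline{\bm{\eta}}\succeq\bm{0}$ on the voltage limits and $\overline{\bm{\gamma}}_n,\underline{\bm{\gamma}}_n\succeq\bm{0}$ on the box limits $\Dc_n$, and handle the kink of $C^{\mbox{\tiny NEM}}$ via its subdifferential: let $\mu^\ast\in\partial C^{\mbox{\tiny NEM}}(Z_0)$, so $\mu^\ast=\pi^+$ when $Z_0>0$, $\mu^\ast=\pi^-$ when $Z_0<0$, and $\mu^\ast\in[\pi^-,\pi^+]$ when $Z_0=0$.

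Since $Z_i=\sum_{n\in\Nc_i}z_n$ and, by symmetry of $\bm{R}$, perturbing $z_n$ for $n\in\Nc_i$ moves the $j$-th entry of $-\bm{R}\bm{Z}$ at rate $-R_{ji}$, stationarity in $d_{nk}$ for a prosumer $n\in\Nc_i$ reads $L_{nk}(\bm{d}_n)=\mu^\ast-\sum_{j=1}^B R_{ji}(\overline{\eta}_j-\underline{\eta}_j)+\overline{\gamma}_{nk}-\underline{\gamma}_{nk}$; together with complementary slackness on the box this yields the optimizer $\bm{d}_n^{\mathrm{c}}=[\bm{f}_n(\bm{1}\pi_i^\ast)]_{\underline{\bm{d}}_n}^{\overline{\bm{d}}_n}$ with effective bus price $\pi_i^\ast:=\mu^\ast-\sum_{j=1}^B R_{ji}(\overline{\eta}_j^\ast-\underline{\eta}_j^\ast)$, precisely the templates \eqref{eq:MemberOptz} and \eqref{eq:chiKappa}. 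It remains to identify the three subgradient cases with the three threshold regimes in \eqref{eq:BusPrice}: since each $\bm{L}_n$ (hence $\bm{f}_n$) is strictly decreasing, the aggregate balance residual $\sum_{i\in\Bc}\sum_{n\in\Nc_i}\bm{1}^\top[\bm{f}_n(\bm{1}\mu-\bm{1}\sum_{j}R_{ji}(\overline{\eta}^\ast_j-\underline{\eta}^\ast_j))]_{\underline{\bm{d}}_n}^{\overline{\bm{d}}_n}$ is non-increasing in $\mu$, and by \eqref{eq:thresholds} it equals $\sigma_1(\bm{g})$ at $\mu=\pi^+$ and $\sigma_2(\bm{g})$ at $\mu=\pi^-$; hence $Z_0^{\mathrm{c}}>0$ (forcing $\mu^\ast=\pi^+$, $\pi_i^\ast=\chi_i^+$) exactly when $G_0<\sigma_1(\bm{g})$, $Z_0^{\mathrm{c}}=0$ (with $\mu^\ast=\pi^z$ the root of \eqref{eq:NZprice}, $\pi_i^\ast=\chi_i^z$) exactly when $G_0\in[\sigma_1(\bm{g}),\sigma_2(\bm{g})]$, and $Z_0^{\mathrm{c}}<0$ ($\mu^\ast=\pi^-$, $\pi_i^\ast=\chi_i^-$) exactly when $G_0>\sigma_2(\bm{g})$. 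This identifies $\pi_i^\ast$ with the announced price $\pi^\ast_i(\bm{g})$ of \eqref{eq:BusPrice}, so the centralized optimizer coincides with $(\bm{d}_n^{\psi^\ast})$ of Lemma~\ref{lem:OptSchedule}, which is the displayed optimality claim.

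To close the Stackelberg conditions: part (a) is immediate from Lemma~\ref{lem:OptSchedule}, since the ex-post allocation $A_n^\ast$ is settled after consumption, so each prosumer's best response to $\chi^\ast$ maximizes $U_n(\bm{d}_n)-\pi^\ast_i(\bm{g})z_n$ and $\psi_n^\ast$ is exactly that argmax. For part (b) I would invoke profit neutrality of $\chi^\ast$ (established in \S\ref{subsec:ExpostAllocation}): profit neutrality collapses the induced welfare to $W^{\chi^\ast,\psi^\ast}=\sum_{n\in\Nc}U_n(\bm{d}_n^{\psi^\ast})-C^{\mbox{\tiny NEM}}(Z_0^{\psi^\ast})$, which by the previous paragraph equals the centralized optimum; and any policy $\chi$ admissible for \eqref{eq:community_optimization} induces a best response that is box- and network-feasible, hence feasible for the centralized program, so its (likewise profit-neutral) welfare cannot exceed that optimum. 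The step I expect to be the main obstacle is the non-smooth, case-based bookkeeping of the second paragraph --- propagating the sign of $Z_0^{\mathrm{c}}$ consistently through the threshold definitions, which themselves are built from the prices $\chi_i^\kappa$ carrying the \emph{optimal} voltage duals; the argument should therefore fix $(\overline{\bm{\eta}}^\ast,\underline{\bm{\eta}}^\ast)$ as the centralized problem's duals at the outset and verify self-consistency, rather than reasoning circularly.
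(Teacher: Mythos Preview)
Your argument is correct and lands on the same endpoint as the paper's: solve the centralized welfare problem, read off the effective bus prices from its KKT system, and observe that these coincide with the announced $\pi_i^\ast(\bm{g})$, so Lemma~\ref{lem:OptSchedule}'s best responses reproduce the centralized optimizer. The execution differs in two places worth noting. First, the paper handles the kink of $C^{\mbox{\tiny NEM}}$ by explicitly splitting the centralized program into three smooth subproblems $\Pc_{Z_0\geq 0},\Pc_{Z_0=0},\Pc_{Z_0\leq 0}$ (packaged as a separate Lemma~\ref{lem:Central}), whereas you treat the three regimes in one stroke via the subdifferential $\mu^\ast\in\partial C^{\mbox{\tiny NEM}}(Z_0)$; your route is more compact and makes the threshold identification in \eqref{eq:thresholds}--\eqref{eq:NZprice} transparent as a monotonicity argument in $\mu$. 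Second, the paper reaches the centralized problem indirectly: it first relaxes the bi-level program \eqref{eq:community_optimization} by dropping the profit-neutrality constraint, substitutes the lower-level KKT conditions to obtain an optimization over prices $\pi_n$, and only then recognizes this as the centralized program of Lemma~\ref{lem:Central} under the change of variables $\bm{d}_n\leftrightarrow\pi_n$; you instead compare the centralized KKT system directly to the decentralized best response, which bypasses that reformulation. Both routes require---as you correctly flag---fixing $(\overline{\bm{\eta}}^\ast,\underline{\bm{\eta}}^\ast)$ as the centralized duals up front so that the thresholds $\sigma_1,\sigma_2$ and prices $\chi_i^\kappa$ are well-defined before the case analysis; the paper leaves this self-consistency implicit, so your explicit acknowledgment of it is an improvement in rigor.
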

\begin{proof}
    See the appendix.
\end{proof}
The proof of Theorem \ref{thm:Equilibrium} solves an upper bound of (\ref{eq:community_optimization}) that relaxes the profit-neutrality constraint (\ref{eq:UpperLevelProfit}).

\subsection{Energy Sharing Payment Uniformity}\label{subsec:PriceUniformity}
We propose here a unique way to allocate the operator's profit/deficit $A^\ast(\bm{g})$. For every bus $i\in \Bc$, the allocation to every prosumer is given by
\begin{equation}\label{eq:FixedCharge}
    A_n^\ast(\bm{g}) =  \left( \pi^\ast_i(\bm{g}) - \pi^{\mbox{\tiny NEM}}(\sum_{n\in \Nc} z_n)\right) \cdot  z_n, \forall n\in \Nc_i,
\end{equation}
which has three favourable features. First, it redistributes the profit/deficit proportionally to the prosumers based on how far their energy-sharing price from the DSO's NEM price, which reflects how much they paid (got paid) for voltage correction. Second, it makes prosumer payment functions $C_n^{\chi^\ast} \forall n\in \Nc$ uniform. Indeed, plugging (\ref{eq:FixedCharge}) into (\ref{eq:PricingPolicy}) cancels out the locational dynamic price $\pi^\ast_i(\bm{g})$, and yields a simple, uniform payment function that charges customers based on the NEM price, \ie for every bus $i\in \Bc$,
$$C_n^{\chi^\ast}(z_n)= \pi^{\mbox{\tiny NEM}}(Z_0) \cdot z_n, \forall n\in \Nc_i.$$
Third, unlike the computationally expensive coalitional-game-based profit allocation schemes such as the Shapley value, the allocation rule in (\ref{eq:FixedCharge}) is straightforward and directly correlates the allocation to the energy-sharing price and the prosumer's net consumption. The decentralization argument may not hold under the allocation in (\ref{eq:FixedCharge}), as it compensates prosumers explicitly based on their own net consumption, which may influence their consumption decisions resulting in deviations from the welfare-maximizing decisions. It might be, however, too difficult for prosumers to anticipate if the operator performs the re-allocation at every multiple netting periods rather than at every single netting period.  


\section{Numerical Study}\label{sec:num}

\begin{figure}
    \centering
    \includegraphics[width=0.55\columnwidth]{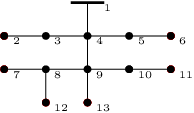}
    \caption{The IEEE 13-bus test feeder.}
    \label{fig:ieee13}
\end{figure}

\begin{figure}
    \centering
    \includegraphics[width=1.05\columnwidth]{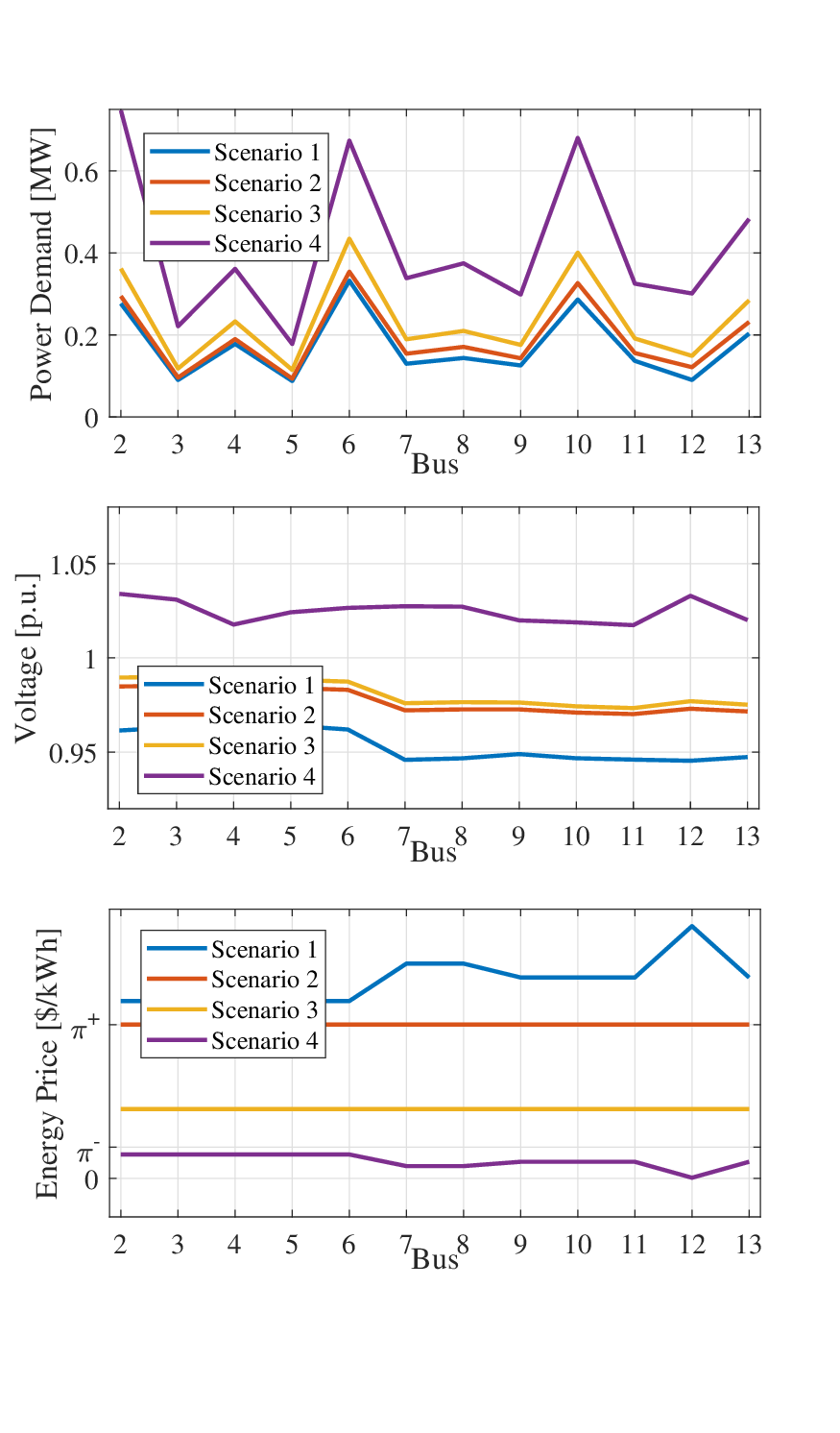}
    \vspace{-20mm}
    \caption{Summary of the numerical tests on the four considered scenarios. The lower panel reports the ex-ante energy prices obtained after solving the energy sharing platform optimization problem~\eqref{eq:community_optimization}. 
    The upper panel shows the cumulative power demand at each bus obtained after the energy sharing operator dispatched the energy prices.
    The middle panel reports the resulting bus voltage magnitudes.}
    \label{fig:results}
\end{figure}

Our network-aware market mechanism was validated on
the IEEE 13-bus feeder converted to a
single-phase equivalent~\cite{cavraro2019inverter}, see Figure~\ref{fig:ieee13}.
Bus 1 is the substation and represents the network slack bus.
Buses 2 to 13 instead host 23 prosumers. 
For every $n\in \Nc$, the following utility function was chosen
\begin{equation}\label{eq:UtilityForm}
   U_{n}(d_{n})=\left\{\begin{array}{ll}
\alpha_{n} d_{n}-\frac{1}{2}\beta_{n} d_{n}^2,\hspace{-0.2cm} &\hspace{-0.2cm} 0 \leq d_{n} \leq \frac{\alpha_{n}}{\beta_{n}} \\
\frac{\alpha_{n}^2}{2 \beta_{n}},\hspace{-0.2cm} &\hspace{-0.2cm} d_{n}>\frac{\alpha_{n}}{\beta_{n}},
\end{array} \right.
\end{equation}
where the parameters $\alpha_{n}, \beta_{n}$ were learned and calibrated using historical retail prices\footnote{The retail prices were taken from \href{https://data.austintexas.gov/stories/s/EOA-C-5-a-Austin-Energy-average-annual-system-rate/t4es-hvsj/}{Data.AustinTexas.gov} historical residential rates in Austin, TX.} and consumptions\footnote{For the historical consumption data, we used pre-2018 \href{https://www.pecanstreet.org/dataport/}{PecanStreet data} for households in Austin, TX.} and by assuming an elasticity of 0.21 taken from \cite{ASADINEJAD_Elasticity:18EPSR} (see appendix D in \cite{Alahmed&Tong:22EIRACM}). The minimum demand was set to ${\underline{\bm{d}}_n} = \bm{0}$ for every $n\in \Nc$, whereas the maximum demands ${\overline{\bm{d}}_n}$ and DER generations were obtained using data from the PecanStreet dataset.
We set $\bm{v}_{\text{min}} = 0.95$ p.u. and $\bm{v}_{\text{max}} = 1.05$ p.u.

In our simulations, we considered four scenarios, described in the following, that differ in the DER generation levels.
For each scenario, we used the exact AC power flow solver to obtain bus voltages, whereas we solved the optimization problems relying on the power flow equation linearization~\eqref{eq:LinearVoltages}.
The results are shown in Figure~\ref{fig:results}.

\emph{Scenario 1}: the DER generation here is zero for each prosumer. Hence, $G_0=0$ and $G_0 < \sigma_1(\bm{g})$. The energy-sharing system is importing energy. In this case, the energy sharing optimization problem solutions are such that $\underline{\eta}^\ast_i \neq 0$, \ie some voltages are on the lower bound $\bm{v}_{\text{min}}$.
The resulting prices are in general higher than $\pi^+$.

\emph{Scenario 2}: the DER generation $G_0$ is non-zero but still not enough to cover the demand, \ie $G_0 < \sigma_1(\bm{g})$. Hence, the energy-sharing system is importing energy. 
However, the optimum demands are such that all the voltages are within the desired bounds and the energy prices are equal $\pi^+$.

\emph{Scenario 3}: the DER generation was further increased in this scenario and $\sigma_1(\bm{g}) \leq G_0 \leq \sigma_2(\bm{g})$. That is, the energy-sharing platform did not exchange active power with the external network. The energy sharing platform optimization problem provides an energy price within $\pi^+$ and $\pi^-$; voltage limits are satisfied at the optimal consumption.

\emph{Scenario 4}: here, we increased the generation until $G_0 \geq \sigma_2(\bm{g})$. The platform exports power to the grid. The energy sharing optimization problem solution is such that the voltages in some locations are exactly $\bm{v}_{\text{max}}$ and the Lagrange multipliers vector $\overline{\bm{\eta}}^\ast$ is different from zero. The energy prices are smaller than $\pi^-$ and close to zero, \ie consumption is incentivized to take full advantage of generation.

Some observations are in order.
In general, we observe that increasing the DER generation $G_0$ results in the decrease of energy prices.
The energy prices can in principle be greater than $\pi^+$, see Scenario 1. This is to ensure that the voltage constraints are satisfied by decreasing the power demand.
Finally, we note a slight difference between the true and the expected (\ie the ones computed by the energy-sharing platform optimization problem) voltage magnitudes. Indeed, we see that the voltages in Scenario 4 are all strictly lower than $\bm{v}_{\text{max}}$ even though we obtained $\overline{\eta}^\ast_i \neq 0$, see the middle panel of Figure~\ref{fig:results}.
This can be explained by the fact that~\eqref{eq:community_optimization} was solved relying on the linearized equations~\eqref{eq:LinearVoltages} rather than on the true power flow equations. Note, however, that using the true equation would result in a nonconvex energy sharing optimization problem possibly displaying multiple local minima.

\section{Conclusion}\label{sec:conclusion}
In this work, we propose a network-aware and welfare-maximizing market mechanism for energy-sharing coalitions that aggregate small but ubiquitous BTM DER downstream of a DSO's revenue meter, charging the energy-sharing systems using a generic NEM tariff. The proposed pricing policy has {\em ex-ante} and {\em ex-post} pricing components. The {\em ex-ante} locational and threshold-based price decreases as the energy-sharing generation-to-demand ratio increases. The price is used to induce a collective prosumer reaction that decentrally maximizes social welfare while being network-cognizant. On the other hand, the {\em ex-post} charge/reward is used to enforce the market operator's profit-neutrality condition. We show that the market mechanism achieves an equilibrium to the Stackelberg game between the operator and its prosumers. We also show that a unique proportional rule to re-allocate the operator's profit/deficit can make the payment function of all energy-sharing prosumers {\em uniform}, even when the network constraints are binding.
Our simulation results leverage real DER data on an IEEE 13-bus test feeder system to show how the dynamic pricing drives the energy sharing's flexible consumption to abide by the network voltage limits.

\section{Acknowledgment}
The work of Ahmed S. Alahmed and Lang Tong was supported in part by the National Science Foundation under Award 2218110 and the Power Systems Engineering Research Center (PSERC) under Research Project M-46.
The work was authored in part by the National Renewable Energy Laboratory, operated by Alliance for Sustainable Energy, LLC, for the U.S. Department of Energy (DOE) under Contract No. DE-AC36-08GO28308. Funding is provided by the U.S. DOE Office of Energy Efficiency and Renewable Energy Building Technologies Office, United States. The views expressed in the article do not necessarily represent the views of the DOE or the U.S. Government. The U.S. Government retains and the publisher, by accepting the article for publication, acknowledges that the U.S. Government retains a nonexclusive, paid-up, irrevocable, worldwide license to publish or reproduce the published form of this work, or allow others to do so, for U.S. Government purposes.


\bibliographystyle{IEEEtran}
\bibliography{CDC}

\appendix[Incorporating Operating Envelopes]\label{app:generalizations}
Here, we present the pricing policy under OEs at the prosumer's revenue meter, as shown in Fig.\ref{fig:NetworkEC}.

OEs limit the net consumption of every prosumer $n\in \Nc$, as
\begin{equation}\label{eq:netconslimit}
z_n \in \mathcal Z_n := [\underline{z}_n,\overline{z}_n],
\end{equation}
where $\underline{z}_n\leq 0$ and $\overline{z}_n\geq 0$ are the export and import envelopes at the prosumers' meters, respectively. From the analysis in \cite{Alahmed&Cavraro&Bernstein&Tong:23AllertonArXiv}, the network-aware pricing policy generalizes as in the following policy.

\begin{policy}
For every bus $i\in \Bc$, the pricing policy charges the prosumers based on a two-part pricing
\begin{equation*}
    \chi^{\ast}: \bm{g} \mapsto C^{\chi^\ast}_n(z_n)=\underbrace{\pi^\ast_i(\bm{g})}_\text{ex-ante price}\cdot z_n - \underbrace{A_n^\ast}_\text{ex-post allocation}, \forall n\in \Nc_i,
\end{equation*}
where the {\em ex-ante} bus price $\pi^\ast_i(\bm{g})$ abides by a two-threshold policy with thresholds
\begin{equation}
\begin{aligned}
    \sigma_1(\bm{g}) &= \sum_{i\in \Bc} \sum_{n \in \Nc_i} \left[\bm{1}^{\top} {[\bm{f}_n(\bm{1}\chi_i^+)]}_{\underline{\bm{d}}_n}^{ \overline{\bm{d}}_n}\right]^{\overline{z}_n+g_n}_{\underline{z}_n+g_n},\nn\\ \sigma_2(\bm{g}) &= \sum_{i\in \Bc} \sum_{n \in \Nc_i} \left[\bm{1}^{\top} {[\bm{f}_n(\bm{1}\chi_i^-)]}_{\underline{\bm{d}}_n}^{ \overline{\bm{d}}_n}\right]^{\overline{z}_n+g_n}_{\underline{z}_n+g_n}\geq \sigma_1(\bm{g}),
\end{aligned}
\end{equation}
as
\begin{equation}
    \pi^\ast_i(\bm{g}) = \begin{cases} \chi^+_i(\bm{g}) & , G_0< \sigma_1(\bm{g}) \\ \chi^z_i(\bm{g}) & , G_0\in [\sigma_1(\bm{g}),\sigma_2(\bm{g})]\\ \chi^-_i(\bm{g}) & , G_0> \sigma_2(\bm{g}), \end{cases}\quad
\end{equation}
and the price $\chi_i^{\kappa}$, where $\kappa:=\{+,-,z\}$, is given by
\begin{equation}
    \chi_i^{\kappa} = \pi^\kappa-\sum_{j\in \Bc} R_{ji} (\overline{\eta}^\ast_j-\underline{\eta}^\ast_j)
\end{equation}
where $\overline{\eta}^\ast_j$ and $\underline{\eta}^\ast_j$ are the dual variables of the upper and lower voltage limits in (\ref{eq:UpperLevelNetwork}), respectively, and the price $\pi^z:=\mu^\ast$ is the solution of
\begin{equation*}
    \sum_{i\in \Bc} \sum_{n\in \Nc_i} \left[\bm{1}^\top [\bm{f}_n(\bm{1}\mu-\bm{1}\sum_{j\in \Bc} R_{ji}(\overline{\eta}^\ast_j-\underline{\eta}^\ast_j))]_{\underline{\bm{d}}_n}^{\overline{\bm{d}}_n}\right]^{\overline{z}_n+g_n}_{\underline{z}_n+g_n}= G_0.
\end{equation*}

For every bus $i\in \Bc$, the prosumer's ex-post charge/reward is denoted by $ A_n^\ast$, which we delineate in $\S$\ref{subsec:ExpostAllocation} and $\S$\ref{subsec:PriceUniformity}.
\end{policy}

\appendix[Mathematical Proofs]\label{app:proofs}
\subsection{Lemma \ref{lem:Central} and Proof of Lemma \ref{lem:Central}}
\begin{lemma}[Maximum welfare under centralized operation]\label{lem:Central}
    The maximum welfare under centralized operation, that solves
    \begin{align}
W^\ast(\bm{g}) := \underset{(\bm{d}_1,\ldots,\bm{d}_N)}{\rm maximize}&~~ \sum_{i\in \Bc} \sum_{n\in \Nc_i} U_n(\bm{d}_n)-C^{\mbox{\tiny NEM}}(Z_0)\nn \\ \text{subject to} &~~
Z_0 = \sum_{n\in \Nc} \big(\bm{1}^\top \bm{d}_{n}-g_n\big) \label{eq:centralOpt}\\&\hspace{0.4cm} \bm{d}_n \in \Dc_n ~ \forall n\in \Nc\nn\\
(\underline{\bm{\eta}},\overline{\bm{\eta}}):&\hspace{0.4cm} \underline{\bm{v}} \preceq  -\bm{R} \bm{Z} \preceq \overline{\bm{v}}.\nn
\end{align}
obeys by two-thresholds as
\begin{align*}
    W^\ast(\bm{g}) &= \sum_{i\in \Bc} \sum_{n\in \Nc_i}\nn\\&\begin{cases}  U_n(\bm{d}^+_n(\bm{g}))-\pi^+(\bm{1}^\top \bm{d}^+_n(\bm{g})-g_n)  & , G_0< \sigma_1(\bm{g}) \\ U_n(\bm{d}^z_n(\bm{g})) &\hspace{-1.2cm} , G_0\in [\sigma_1(\bm{g}),\sigma_2(\bm{g})]\\ U_n(\bm{d}^-_n(\bm{g}))-\pi^-(\bm{1}^\top \bm{d}^-_n(\bm{g})-g_n) & , G_0> \sigma_2(\bm{g}), \end{cases}
\end{align*}
where the thresholds ($\sigma_1(\bm{g}),\sigma_2(\bm{g})$) are as in (\ref{eq:thresholds}), the consumption $\bm{d}^\kappa_n(\bm{g}), \forall n\in \Nc$ and $\kappa:=\{+,-,z\}$, is 
\begin{align*}
    \bm{d}^\kappa_n(\bm{g}):= [\bm{f}_n(\bm{1}\pi^\kappa - \bm{1}\sum_{j \in \Bc}R_{ji}(\overline{\eta}^\ast_j-\underline{\eta}^\ast_j))]_{\underline{\bm{d}}_{n}}^{\overline{\bm{d}}_{n}}, \forall n\in \Nc_i,
\end{align*}
and $\overline{\eta}^\ast_i,\underline{\eta}^\ast_i\geq 0, \forall i\in \Bc$ is from the KKT conditions in (\ref{eq:KKTvoltageMax})-(\ref{eq:KKTvoltageMin}).
\end{lemma}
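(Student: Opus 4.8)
First I would record that \eqref{eq:centralOpt} is a convex program. Each $U_n$ is strictly concave and additive, the boxes $\bm d_n\in\Dc_n$, the energy-balance equality, and the linearized voltage constraints $\underline{\bm v}\preceq-\bm R\bm Z\preceq\overline{\bm v}$ are all affine in $(\bm d_1,\dots,\bm d_N)$, and $C^{\mbox{\tiny NEM}}(Z_0)=\max\{\pi^+Z_0,\pi^-Z_0\}$ is convex in $Z_0$ because $\pi^+\ge\pi^-$; hence $-C^{\mbox{\tiny NEM}}$ is concave and the objective is concave---indeed strictly concave in $(\bm d_1,\dots,\bm d_N)$, which makes the maximizer $(\bm d_n^\ast)$, and therefore $Z_0^\ast$ and $W^\ast(\bm g)$, unique. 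Since the feasible set is a nonempty (by the standing feasibility assumption) compact polytope and all constraints are affine, the KKT conditions, read with subgradients for the nonsmooth term, are necessary and sufficient for optimality, so it suffices to solve them.

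Next I would write the stationarity condition. Substituting $Z_0=\sum_m(\bm 1^\top\bm d_m-g_m)$ and attaching multipliers $\overline{\bm\eta},\underline{\bm\eta}\succeq\bm 0$ to the upper and lower voltage inequalities and $\overline{\bm\gamma}_n,\underline{\bm\gamma}_n\succeq\bm 0$ to the device bounds, stationarity in $d_{nk}$ for a prosumer $n$ on bus $i$ reads $L_{nk}(\bm d_n^\ast)=\mu^\ast-\sum_{j\in\Bc}R_{ji}(\overline\eta_j^\ast-\underline\eta_j^\ast)+\overline\gamma_{nk}-\underline\gamma_{nk}$, where $\mu^\ast\in\partial C^{\mbox{\tiny NEM}}(Z_0^\ast)$ and where I have used the symmetry of $\bm R$ (from \eqref{eq:CompactLinDistFlow}) and the fact that $\partial Z_j/\partial d_{nk}$ equals $1$ precisely when $n\in\Nc_j$. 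Exactly as in the proof of Lemma~\ref{lem:OptSchedule}, the three box-multiplier cases collapse to the projection $\bm d_n^\ast=[\bm f_n(\bm 1(\mu^\ast-\sum_{j\in\Bc}R_{ji}(\overline\eta_j^\ast-\underline\eta_j^\ast)))]_{\underline{\bm d}_n}^{\overline{\bm d}_n}$, so on bus $i$ the prosumers face the single effective price $\mu^\ast-\sum_{j\in\Bc}R_{ji}(\overline\eta_j^\ast-\underline\eta_j^\ast)$, which is exactly $\chi_i^\kappa$ once $\mu^\ast$ is identified with $\pi^\kappa$.

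Then I would resolve $\mu^\ast$ through $\partial C^{\mbox{\tiny NEM}}$: if $Z_0^\ast>0$ then $\partial C^{\mbox{\tiny NEM}}(Z_0^\ast)=\{\pi^+\}$, so $\mu^\ast=\pi^+$ and $\bm d_n^\ast=\bm d_n^+(\bm g)$; if $Z_0^\ast<0$ then $\mu^\ast=\pi^-$ and $\bm d_n^\ast=\bm d_n^-(\bm g)$; and if $Z_0^\ast=0$ then $\mu^\ast\in[\pi^-,\pi^+]$ is precisely the value $\pi^z$ pinned down by the balance identity $\sum_{i\in\Bc}\sum_{n\in\Nc_i}\bm 1^\top\bm d_n^\ast=G_0$, which is \eqref{eq:NZprice}. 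To convert the sign of $Z_0^\ast$ into the stated threshold test I would fix one optimal primal--dual pair and note that $\pi^+\ge\pi^z\ge\pi^-$ together with the monotone non-increasingness of each $\bm f_n$ (inherited from strict concavity of $U_n$) give $\bm d_n^+(\bm g)\preceq\bm d_n^z(\bm g)\preceq\bm d_n^-(\bm g)$, hence $\sigma_1(\bm g)\le\sigma_2(\bm g)$; in the $Z_0^\ast>0$ regime $Z_0^\ast=\sigma_1(\bm g)-G_0$ forces $G_0<\sigma_1(\bm g)$, whereas if $G_0<\sigma_1(\bm g)$ then $Z_0^\ast\le0$ would give $\mu^\ast\le\pi^+$, effective bus prices $\le\chi_i^+$, hence $\bm d_n^+(\bm g)\preceq\bm d_n^\ast$ and $Z_0^\ast\ge\sigma_1(\bm g)-G_0>0$, a contradiction; the $\sigma_2(\bm g)$ boundary and the middle regime follow symmetrically. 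Finally, substituting the three forms of $\bm d_n^\ast$ into the objective and using $C^{\mbox{\tiny NEM}}(Z_0^\ast)=\pi^+Z_0^\ast$, $0$, $\pi^-Z_0^\ast$ respectively with $Z_0^\ast=\sum_{i\in\Bc}\sum_{n\in\Nc_i}(\bm 1^\top\bm d_n^\kappa(\bm g)-g_n)$ yields the claimed piecewise expression for $W^\ast(\bm g)$.

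I expect the main obstacle to be the mild circularity in the definitions: the thresholds $\sigma_1(\bm g),\sigma_2(\bm g)$ and the bus prices $\chi_i^\kappa$ are written in terms of the optimal voltage multipliers $\overline{\bm\eta}^\ast,\underline{\bm\eta}^\ast$, which become available only after the KKT system is solved and which need not be unique when the voltage constraints are degenerate. The resolution is that $(\bm d_n^\ast)$, $Z_0^\ast$ and $W^\ast(\bm g)$ are unique regardless, so the regime-identification argument above can be carried out with any fixed optimal primal--dual pair; the non-smoothness of $C^{\mbox{\tiny NEM}}$ at the origin is dispatched cleanly by the subdifferential, and beyond this bookkeeping the argument is routine.
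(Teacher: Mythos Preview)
Your proposal is correct. The route, however, differs from the paper's in one structural choice. The paper does not attack the nondifferentiable program \eqref{eq:centralOpt} directly; it first splits it into the three sign-constrained smooth programs $\Pc_{Z_0\ge 0}$, $\Pc_{Z_0=0}$, $\Pc_{Z_0\le 0}$, applies ordinary KKT to each, and imports the two-threshold structure on $G_0$ from the earlier single-prosumer result in \cite{Alahmed&Tong:22IEEETSG}, treating the additional dimensions (multiple prosumers, multiple buses, voltage constraints) as straightforward extensions of that reference. You instead keep the program in one piece and absorb the kink of $C^{\mbox{\tiny NEM}}$ into the subdifferential $\mu^\ast\in\partial C^{\mbox{\tiny NEM}}(Z_0^\ast)$, obtaining the three regimes as the three branches of $\partial C^{\mbox{\tiny NEM}}$ and recovering the thresholds by a self-contained monotonicity/contradiction argument in $\bm f_n$. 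What the paper's decomposition buys is that each subproblem is smooth and the threshold bookkeeping is delegated to a citation; what your subgradient approach buys is a unified derivation that does not rely on the external reference and that makes the ``which regime'' step explicit. Your closing remark about the circularity (the thresholds depend on $(\overline{\bm\eta}^\ast,\underline{\bm\eta}^\ast)$, which are only available after KKT) and its resolution via uniqueness of the primal optimum is a point the paper leaves implicit.
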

\subsection*{Proof of Lemma \ref{lem:Central}}
The convex non-differentiable program in (\ref{eq:centralOpt}) is a generalization to the standalone consumer decision problem under the DSO’s NEM X regime in \cite{Alahmed&Tong:22IEEETSG} with (a) the additional dimension of $N$ users located at $B$ buses and (b) network constraints. Therefore, (\ref{eq:centralOpt}) can be divided into three convex and differentiable programs based on the energy-sharing net consumption $Z_0$, namely ($\Pc_{Z_0\geq 0},\Pc_{Z_0= 0},\Pc_{Z_0\leq 0}$), as
\begin{align*}
\Pc_{Z_0\geq 0} : \underset{(\bm{d}_1,\ldots,\bm{d}_N)}{\rm maximize}&~~ \sum_{i\in \Bc} \sum_{n\in \Nc_i} U_n(\bm{d}_n)-\pi^+\cdot Z_0\nn \\ \text{subject to} &~~
Z_0\geq 0 \\&\hspace{0.4cm} \bm{d}_n \in \Dc_n ~ \forall n\in \Nc\nn\\
(\underline{\bm{\eta}}^+,\overline{\bm{\eta}}^+):&\hspace{0.4cm} \underline{\bm{v}} \preceq  -\bm{R} \bm{Z} \preceq \overline{\bm{v}}.\nn
\end{align*}
\begin{align*}
\Pc_{Z_0= 0} : \underset{(\bm{d}_1,\ldots,\bm{d}_N)}{\rm maximize}&~~ \sum_{i\in \Bc} \sum_{n\in \Nc_i} U_n(\bm{d}_n)\hspace{1.53cm}\nn \\ \text{subject to} &~~
Z_0= 0 \\&\hspace{0.4cm} \bm{d}_n \in \Dc_n ~ \forall n\in \Nc\nn\\
(\underline{\bm{\eta}}^-,\overline{\bm{\eta}}^-):&\hspace{0.4cm} \underline{\bm{v}} \preceq  -\bm{R} \bm{Z} \preceq \overline{\bm{v}}.\nn
\end{align*}
\begin{align*}
\Pc_{Z_0\leq 0} : \underset{(\bm{d}_1,\ldots,\bm{d}_N)}{\rm maximize}&~~ \sum_{i\in \Bc} \sum_{n\in \Nc_i} U_n(\bm{d}_n)-\pi^-\cdot Z_0\nn \\ \text{subject to} &~~
Z_0\leq 0 \\&\hspace{0.4cm} \bm{d}_n \in \Dc_n ~ \forall n\in \Nc\nn\\
(\underline{\bm{\eta}}^0,\overline{\bm{\eta}}^0):&\hspace{0.4cm} \underline{\bm{v}} \preceq  -\bm{R} \bm{Z} \preceq \overline{\bm{v}}.\nn
\end{align*}
It has been shown in \cite{Alahmed&Tong:22IEEETSG} that the optimal consumption policy is a two-thresholds policy on the aggregate renewables $G_0$. The three problems above are generalizations of \cite{Alahmed&Tong:22IEEETSG} that incorporate multiple prosumers and buses dimension and network constraints dimension, which therefore yields, for every bus $i\in \Bc$, the following optimal consumption vector
\begin{equation*}
    \bm{d}^\ast_n(\bm{g}) = \begin{cases} \bm{d}^+_n(\bm{g}) & , G_0< \sigma_1(\bm{g}) \\ \bm{d}^z_n(\bm{g}) & , G_0\in [\sigma_1(\bm{g}),\sigma_2(\bm{g})]\\ \bm{d}^-_n(\bm{g}) & , G_0> \sigma_2(\bm{g}), \end{cases},\quad \forall n \in \Nc_i,
\end{equation*}
where the thresholds ($\sigma_1(\bm{g}),\sigma_2(\bm{g})$) are as in (\ref{eq:thresholds}) and the vector $\bm{d}^\kappa_n(\bm{g})$, where $\kappa:=\{+,-,z\}$, is given by
\begin{align}\label{eq:CentralCons}
    \bm{d}^\kappa_n(\bm{g}):= [\bm{f}_n(\bm{1}\pi^\kappa - \bm{1}\sum_{j \in \Bc}R_{ji}(\overline{\eta}^\ast_j-\underline{\eta}^\ast_j))]_{\underline{\bm{d}}_{n}}^{\overline{\bm{d}}_{n}}, \forall n\in \Nc_i.
\end{align}
By generalizing the special case in \cite{Alahmed&Tong:22IEEETSG} through incorporating the additional dimension of $N$ users located at $B$ buses and the network constraints, one can see that the price $\pi^z(\bm{g})$ is as in (\ref{eq:NZprice}). Lastly, for every $i\in \Bc$, the dual variable $\overline{\eta}^\ast_i \geq 0$ is computed from the KKT conditions
\begin{subequations}\label{eq:KKTvoltageMax}
\begin{align}
    \overline{\eta}^\ast_i\left(- \sum_{j\in \Bc} R_{ij} \sum_{k \in \Nc_j} (\bm{1}^\top \bm{d}^\kappa_n(\bm{g}) - g_n) \right)&=0\\ - \sum_{j\in \Bc} R_{ij} \sum_{k \in \Nc_j} (\bm{1}^\top \bm{d}^\kappa_n(\bm{g}) - g_n) - \overline{v}_i & \leq 0
\end{align}
\end{subequations}
and $\underline{\eta}^\ast_i \geq 0$ is similarly computed from
\begin{subequations}\label{eq:KKTvoltageMin}
\begin{align}
    \underline{\eta}^\ast_i\left(- \sum_{j\in \Bc} R_{ij} \sum_{k \in \Nc_j} (\bm{1}^\top \bm{d}^\kappa_n(\bm{g}) - g_n) \right)&=0\\ - \sum_{j\in \Bc} R_{ij} \sum_{k \in \Nc_j} (\bm{1}^\top \bm{d}^\kappa_n(\bm{g}) - g_n) - \underline{v}_i & \geq 0
\end{align}
\end{subequations}
\hfill$\blacksquare$
\subsection{Proof of Theorem \ref{thm:Equilibrium}}
Recall the bi-level program of operator and prosumers decisions
\begin{subequations}\label{eq:community_optimizationProof}
\begin{align}
 \underset{\bm{C}^\chi(\cdot)}{\operatorname{maximize}}& \Bigg(W^{\chi_{\bm{\psi}}} = \sum_{n\in \Nc} U_n(\bm{d}^{\psi^\sharp_\chi}_{n}) - C^{\mbox{\tiny NEM}}(Z_0^{\psi^\sharp_\chi})\Bigg)\label{eq:UpperLevelObjProof}\\
 \text{subject to}&\quad \quad \sum_{n\in \Nc} C^\chi_n(z_n^{\psi^\sharp_\chi}) = C^{\mbox{\tiny NEM}}(Z_0^{\psi^\sharp_\chi})\label{eq:UpperLevelProfitProof}\\&
 \quad \quad Z_0^{\psi^\sharp_\chi} = \sum_{n\in \Nc} \big(\bm{1}^\top \bm{d}^{\psi^\sharp_\chi}_{n}- g_n\big)\label{eq:UpperLevelEnergyBalanceProof}\\
 (\underline{\bm{\eta}},\overline{\bm{\eta}})&\quad \quad \underline{\bm{v}} \preceq  -\bm{R} \bm{Z}^{\psi^\sharp_\chi} \preceq \overline{\bm{v}}\label{eq:UpperLevelNetworkProof}\\
 &\quad \quad \text{for all } i=1,\ldots, B, n \in \Nc_i \\
 &	\bm{d}^{\psi^\sharp_\chi}_{n}\hspace{-0.1cm} := \underset{\bm{d}_n \in \Dc_n}{\operatorname{argmax}}~S^\chi_n(\bm{d}_n,g_n)\hspace{-0.1cm}:=U_n(\bm{d}_n)-C^\chi_n(z_n)\label{eq:LowerLevelObjProof}\\
 &	\hspace{1.3cm}	\text{subject to}~~ z_n= \bm{1}^\top \bm{d}_n - g_n.\label{eq:LowerLevelEnergyBalanceProof}
\end{align}
\end{subequations}
We solve a relaxed version of (\ref{eq:community_optimizationProof}) that does not require profit-neutrality, hence (\ref{eq:UpperLevelProfitProof}) is removed, and for every $n\in \Nc$, we want to find the price $\pi_n$ in $\tilde{C}_n^\chi(\cdot) = \pi_n \cdot z_n$. So, (\ref{eq:community_optimizationProof}) is reformulated to
\begin{subequations}
\begin{align}
 \underset{\tilde{\bm{C}}^\chi(\cdot)}{\operatorname{maximize}}& \Bigg(W^{\chi_{\bm{\psi}}} = \sum_{n\in \Nc} U_n(\bm{d}^{\psi^\sharp_\chi}_{n}) - C^{\mbox{\tiny NEM}}(Z_0^{\psi^\sharp_\chi})\Bigg)\nn\\
 \text{subject to}&
 \quad \quad Z_0^{\psi^\sharp_\chi} = \sum_{n\in \Nc} \big(\bm{1}^\top \bm{d}^{\psi^\sharp_\chi}_{n}- g_n\big)\nn\\
 (\underline{\bm{\eta}},\overline{\bm{\eta}})&\quad \quad \underline{\bm{v}} \preceq  -\bm{R} \bm{Z}^{\psi^\sharp_\chi} \preceq \overline{\bm{v}}\nn\\
 &\quad \quad \text{for all } i=1,\ldots, B, n \in \Nc_i\nn \\
 &~~~ L_{kn}(d_{kn}^{\psi_\chi^\sharp})-\pi+ \overline{\lambda}^\ast_{kn} -  \underline{\lambda}^\ast_{kn}=0,~~ \forall k\in \Kc \nn\\
 &\hspace{1.6cm} \overline{d}_{kn} - d_{kn}^{\psi_\chi^\sharp} \geq 0 \perp \overline{\lambda}^\ast_{kn} \geq 0,~~ \forall k\in \Kc \nn\\
 &\hspace{2.35cm} d_{kn}^{\psi_\chi^\sharp} \geq  0 \perp \underline{\lambda}^\ast_{kn} \geq 0, ~~ \forall k\in \Kc\nn
\end{align}
\end{subequations}
where $x \perp y$ means that $x$ and $y$ are perpendicular. From Lemma \ref{lem:OptSchedule}, the lower level's KKT conditions can be replaced as the following
\begin{subequations}
\begin{align}
 \underset{\tilde{\bm{C}}^\chi(\cdot)}{\operatorname{maximize}}& \Bigg(W^{\chi_{\bm{\psi}}} = \sum_{n\in \Nc} U_n(\bm{d}^{\psi^\sharp_\chi}_{n}) - C^{\mbox{\tiny NEM}}(Z_0^{\psi^\sharp_\chi})\Bigg)\nn\\
 \text{subject to}&
 \quad \quad Z_0^{\psi^\sharp_\chi} = \sum_{n\in \Nc} \big(\bm{1}^\top \bm{d}^{\psi^\sharp_\chi}_{n}- g_n\big)\nn\\
 (\underline{\bm{\eta}},\overline{\bm{\eta}})&\quad \quad \underline{\bm{v}} \preceq  -\bm{R} \bm{Z}^{\psi^\sharp_\chi} \preceq \overline{\bm{v}}\nn\\
 &\quad \quad \text{for all } i=1,\ldots, B, n \in \Nc_i\nn \\
 &~~~ \bm{d}^{\psi^\sharp_\chi}_{n}(\pi_n) = [\bm{f}_{n}(\bm{1}\pi_n)]_{\underline{\bm{d}}_{n}}^{\overline{\bm{d}}_{n}}.\nn
\end{align}
\end{subequations}

Note that if $\pi^\ast_n$ is found, we have equilibrium. The program above is similar to the central program (\ref{eq:centralOpt}) in Lemma \ref{lem:Central}, but with the prices $\pi_n, \forall n\in \Nc$ as decision variables rather than the consumptions $\bm{d}_n, \forall n\in \Nc$. Therefore, the threshold structure in Lemma \ref{lem:Central} holds with the prices used to compute $\bm{d}^\ast_n, \forall n\in \Nc$ being the optimal prices. Therefore, we have, for every bus $i\in \Bc$,
\begin{equation*}
    \pi^\ast_i(\bm{g}) = \begin{cases} \pi^+-\sum_{j\in \Bc} R_{ji} (\overline{\eta}^\ast_j-\underline{\eta}^\ast_j) &\hspace{-0.3cm} , G_0< \sigma_1(\bm{g}) \\ \pi^z(\bm{g})-\sum_{j\in \Bc} R_{ji} (\overline{\eta}^\ast_j-\underline{\eta}^\ast_j) &\hspace{-0.3cm} , G_0\in [\sigma_1(\bm{g}),\sigma_2(\bm{g})]\\ \pi^--\sum_{j\in \Bc} R_{ji} (\overline{\eta}^\ast_j-\underline{\eta}^\ast_j) &\hspace{-0.3cm} , G_0> \sigma_2(\bm{g}), \end{cases}\quad
\end{equation*}
One can easily see from Lemma \ref{lem:OptSchedule}, that by plugging the equilibrium price $\pi^\ast_i(\bm{g}), \forall i\in \Bc$, the prosumers' optimal consumption decisions match the consumption decisions under centralized operation in Lemma \ref{lem:Central}.  

Now, we note that for every bus $i\in \Bc$, the pricing policy $\tilde{C}^{\chi^\ast}_n = \pi^\ast_i(\bm{g})\cdot z_n, \forall n\in \Nc_i$ does not achieve profit neutrality. Therefore, for every bus $i\in \Bc$, the pricing policy is augmented by the {\em ex-post} allocation $A^\ast_n$ to become $C^{\chi^\ast}_n = \pi^\ast_i(\bm{g})\cdot z_n - A^\ast_n, \forall n\in \Nc_i$ with $\sum_{i \in \Bc} \sum_{n\in \Nc_i} A^\ast_n=\sum_{i \in \Bc} \sum_{n\in \Nc_i} (\pi^\ast_i(\bm{g}) -\pi^{\mbox{\tiny NEM}}(Z_0) )\cdot z_n$. One can then apply Def.\ref{ax:ProfitNeutrality} to show that $\bm{C}^{\chi^\ast}$ achieves profit neutrality. \hfill$\blacksquare$
\end{document}